\documentclass[onefignum,onetabnum]{siamart190516}

\usepackage[title]{appendix}

\usepackage{lineno,hyperref}
\usepackage{graphicx}
\usepackage{subcaption}
\usepackage{tikz}
\usetikzlibrary{arrows,calc}
\usetikzlibrary{patterns}
\tikzset{
	>=stealth',
	help lines/.style={dashed, thick},
	axis/.style={<->},
	important line/.style={thick},
	connection/.style={dotted},
}

\usepackage{algorithm}
\usepackage{algpseudocode}
\usepackage{multirow}

\setlength{\textwidth}{6.25in}
\setlength{\oddsidemargin}{0.25in}
\setlength{\evensidemargin}{0.25in}
\setlength{\textheight}{8.5in}
\setlength{\topmargin}{0in}

\def\scl{T}

\def\half{\mbox{$\frac{1}{2}$}}
\def\x{\mathbf{x}}

\def\e{\mathbf{\e}}

\def\0{\mathbf{0}}

\def\eps{\epsilon}

\newcommand\numbereq{\addtocounter{equation}{1}\tag{\theequation}}



\title{A Fully Polynomial Time Approximation Scheme for the Replenishment Storage Problem}

\author{Dorit S. Hochbaum\thanks{Department of IEOR, Etcheverry Hall, Berkeley, CA, supported in part by NSF award No. CMMI-1760102.
		(\email{hochbaum@ieor.berkeley.edu}).}
	\and Xu Rao\thanks{Department of IEOR, Etcheverry Hall, Berkeley, CA.
		(\email{xrao@berkeley.edu}).}}

\headers{}{D. S. Hochbaum and X. Rao}

\ifpdf
\hypersetup{
	pdftitle={},
	pdfauthor={D. S. Hochbaum and X. Rao}
}
\fi

\begin{document}
	
	\maketitle
	
	\begin{abstract}
		The Replenishment Storage problem (RSP) is to minimize the storage capacity requirement for a deterministic demand, multi-item inventory system where each item has a given reorder size and cycle length. The reorders can only take place at integer time units within the cycle. This problem was shown to be weakly NP-hard for constant joint cycle length (the least common multiple of the lengths of all individual cycles).
		When all items have the same constant cycle length, there exists a Fully Polynomial Time Approximation Scheme (FPTAS), but no FPTAS has been known for the case when the individual cycles are different. Here we devise the first known FPTAS for the RSP with different individual cycles and constant joint cycle length.
	\end{abstract}
	
	\begin{keywords}
		Approximation algorithm; Fully polynomial time approximation scheme.
	\end{keywords}

\section{Introduction}

The Replenishment Storage problem (RSP) arises in planning a periodic replenishment schedule of multiple items so as to minimize the storage capacity required.
The input to the RSP consists of in a multi-item inventory system where each item has deterministic demand, a given reorder size and its own cycle length determined by its Economic Order Quantity.
Here the reorders can only take place at an integer time unit within the cycle. The problem is to determine the timing of the first replenishment of each item within its cycle so that the maximum inventory level of all items over time is minimized.

An instance of RSP consists of $n$ items. Each item $i$ is associated with an integer individual cycle length $k_i$, and an integer reorder size $s_i$. Here $s_i$ is expressed in terms of the storage amount required for the reorder quantity.  The {\em joint cycle length} of the $n$ items is the least common multiple (lcm) of the lengths $k_i$, $i=1,\ldots,n$.   We let $k=\mbox{lcm}(k_1,\ldots,k_n)$. By the cyclical nature of the problem, the total inventory levels repeat periodically every $k$ units of time for any reorder schedule.  If all items have the same cycle length, $k$, the problem is said to be {\em single-cycle}, otherwise it is said to be {\em multi-cycle}.

The RSP is an NP-hard problem \cite{Hall,HR19}, so there is no polynomial time optimization algorithm unless $P=NP$. But a polynomial time approximation scheme may exist for the problem.
An approximation scheme is a family of $(1+\eps)$-approximation algorithms for every $\eps >0$. If the running time is polynomial in the problem size for every fixed $\eps$, then this scheme is a Polynomial Time Approximation Scheme (PTAS); furthermore, if the running time is polynomial in both the problem size and $1/\eps$, then it is a Fully Polynomial Time Approximation Scheme (FPTAS).
Hochbaum and Rao \cite{HR19} gave a Fully Polynomial Time Approximation Scheme (FPTAS) for the single-cycle RSP when $k$ is a constant \cite{HR19}. For the multi-cycle case however no FPTAS has been known to date. Here, we establish for the first time an FPTAS for the multi-cycle RSP when the joint cycle length, $k$, is constant. We also observe here that the FPTAS of Hochbaum and Rao for the single cycle RSP is fixed-parameter tractable (FPT) and is in fact linear for a constant length of the single cycle.


\subsection{Related Literature}

The single-cycle RSP was shown by Hall \cite{Hall} to be NP-hard, even when the joint cycle length $k=2$. Since the single-cycle RSP is a special case of the multi-cycle RSP, it implies that the multi-cycle RSP is also NP-hard, even when $k$ is small.
Hochbaum and Rao \cite{HR19} investigated the complexity status of the single-cycle and the multi-cycle RSPs and showed that the problems are strongly NP-hard when $k$ is not a constant, but weakly NP-hard when $k$ is a constant. They further provided in \cite{HR19} a pseudo-polynomial optimization algorithm for the two problems.

These complexity results imply that there is no polynomial time algorithm for single-cycle and the multi-cycle RSPs even when $k$ is a constant, unless $P=NP$.
To that end several approximation results have been delivered for the single-cycle RSP. Hall \cite{Hall} provided a linear time approximation algorithm for the single-cycle RSP, with an approximation factor of $\left(1+\frac{2}{k}\right)$, even for non-constant $k$. Hochbaum and Rao \cite{HR19} devised for the single-cycle RSP with constant $k$ an FPTAS, and for the single-cycle RSP with non-constant $k$ a Polynomial Time Approximation Scheme (PTAS). The complexity of the FPTAS for a $(1+\eps)$-approximation algorithm is $O(\frac{n}{\eps^{2k}})$ \footnote{There was a mistake in the proof of Theorem 6 in \cite{HR19}, but this can be addressed by replacing the original scaling factor $\eps^2 D$ by $\eps^2 k D$ in the FPTAS. The running time is only affected by a constant factor $k^k$ so the approximation scheme is still an FPTAS. What's more, we observe here that this running time is fixed-parameter tractable for parameter $k$.}, and the complexity of the PTAS for non-constant cycle length is $O((\frac{2}{\eps})!\cdot \frac{n}{\eps^{2k+2}})$.

For the multi-cycle RSP with only two items, Murthy et. al. \cite{MBR} provided an optimal closed-form replenishment solution, meaning that it is solved in constant time.
Studies of algorithmic results for the multi-cycle RSP with more than two items have been focused on the development of heuristics.  These include genetic algorithms \cite{MCK, YC}), a smoothing procedure utilizing a Boltzmann function \cite{YCL}, local-search procedures \cite{CH}, a simulated-annealing algorithm \cite{Boctor} and a hybrid heuristic \cite{Boctor,RU}. No algorithm with guaranteed approximation bound has been known for the multi-cycle RSP.

\subsection{Contributions}
A weakly NP-hard problem can have an FPTAS and it was shown in \cite{HR19} that for constant $k$ the RSP problem is weakly NP-hard.   For constant parameter $k$, Hochbaum and Rao \cite{HR19} devised for the single-cycle RSP an FPTAS, which we observe here is fixed-parameter tractable (FPT).   We devise here an FPTAS for the multi-cycle RSP with constant joint cycle length for the first time. Unlike the case of the single-cycle (in \cite{HR19}), the running time of this FPTAS for the multi-cycle RSP is not fixed-parameter tractable for parameter $k$.

A summary of the complexity results for RSP that includes our contributions here is given in Table \ref{table:NP}.

\begin{table}[h!]
	\centering
	\small
	\caption{Summary of complexity and algorithmic results for the RSP.}
	\begin{tabular}{ |c|c|c| }
		\hline
		Problem & non-constant joint cycle & constant joint cycle  \\
		\hline
		\multirow{3}{*}{single-cycle}
		& strongly NP-hard &  weakly NP-hard   \\
		& $(1+2/k)$-approximation \cite{Hall},  &   $(1+2/k)$-approximation \cite{Hall}, \\
		& PTAS \cite{HR19} &  pseudo-poly algorithm  \cite{HR19} \& FPTAS (here FPT in $k$)\cite{HR19} \\
		\hline
		\multirow{2}{*}{multi-cycle}
		 &   strongly NP-hard&  weakly NP-hard  \\
		 &   - &  pseudo-poly algorithm \cite{HR19} \& FPTAS (here)  \\
		\hline
	\end{tabular}
	\label{table:NP}
\end{table}

\subsection{Paper Overview}
The next section, Section \ref{sec:Preliminaries}, introduces the notation, an integer programming formulation as well as a pseudo-polynomial algorithm for the RSP which is relevant to the approximation scheme.  In Section \ref{sec:FPTAS-multi} we describe the new fully polynomial-time approximation scheme (FPTAS) for the multi-cycle RSP for constant joint cycle length $k$.

\section{Preliminaries}\label{sec:Preliminaries}

Our approximation scheme utilizes a dynamic programming algorithm for the RSP derived by Hochbaum and Rao \cite{HR19}. That dynamic programming algorithm uses an integer programming (IP) formulation of the RSP that was introduced in \cite{HR19}.
Since this algorithm and IP formulation are crucial for our FPTAS, we sketch them here.

We first present necessary notation. For an instance of RSP, the demand rates and inventory levels are given in terms of the respective reorder size: for item $i$, the demand per unit of time is $\frac{s_i}{k_i}$, and its inventory levels at each replenishment cycle of $k_i$ time units starting at time $T$, $(T+0,T+1,\ldots ,T+ k_i-1)$, are $(s_i, \frac{k_i-1}{k_i} s_i, \frac{k_i-2}{k_i} s_i,\ldots,\frac{1}{k_i}s_i)$.
Recall that since $k=\mbox{lcm}(k_1,\ldots,k_n)$, the inventory levels are periodic within a cycle of $k$ time units (repeat every $k$ time units). It is therefore sufficient to determine the peak storage requirement by examining a time interval of length $k$. This is because each item must be reordered at least once in such interval, and the peak storage always coincides with the reorder timing of an item. (Note that inventory level at time $0$ is the same as inventory level at time $k$.)

The decision variables in the integer programming formulations are the assignments of time periods within the $k$-unit time frame to the orders of all items.  This assignment of timing is given as an $n\times k$ binary matrix $\x$ where
\[ x_{ij}=\left\{ \begin{array}{cc}
1& \mbox{       if item } i \mbox{ is ordered at time } $j$, \\
0&  \mbox{otherwise.}
\end{array}
\right.  \]
\begin{definition}\label{defvad}
	A $n\times k$ binary matrix $\x$
	is said to be a {\em valid assignment} for a given instance if and only if each item $i$ is replenished exactly once every $k_i$ time units. That is,
	\begin{center}
		$\sum_{j=1}^{k_i} x_{ij} =1\ \ \quad   i =1,\ldots ,n,\quad \mbox{and} \quad x_{ij}=x_{i,(j-k_i)} \quad i=1,\ldots,n,\ \ j=k_i+1,\ldots,k.  $
	\end{center}
\end{definition}

The following lists the notation for demand rates, inventory levels, the total sum of reorder sizes at an integer time and the optimal peak storage:
\\ $d_i=\frac{s_i}{k_i}$: demand rate of item $i$ for $i=1,...,n$.
\\ $D=\sum _{i=1}^n d_i= \sum _{i=1}^n \frac{s_i}{k_i}$: total demand (aggregate stock depletion) per unit of time.
\\ $V_\ell(\x)$: the inventory level at time $\ell$ according to assignment $\x$ for $\ell=1,...,k$.
\\ $V(\x)=\max_{\ell\in \{1,..,k\}} V_\ell(\x)$: the maximum inventory level (peak storage) of a cycle.
\\ $Q_j(\x)=\sum_{i=1}^{n} s_ix_{ij}$: the total sum of reorder sizes at time $j$ for $j=1,...,k$.
\\ $V^* = \min _{\x {\rm {\ valid}}} V(\x)$: the optimal peak inventory level.
\\ Let the following quantity, which is a constant, be denoted by $C$: $C=\sum_{i=1}^n \half (1+\frac{1}{k_i})ks_i+\frac{(1+k)k}{2}D$.  This quantity is used in the IP formulation and the FPTAS.

\subsection{The Integer Programming Formulation of the RSP}\label{sec:IP}
The IP formulation of Hochbaum and Rao \cite{HR19} is based on three lemmas derived in their paper, which are included for the sake of completion. Lemma \ref{lem:permutation} shows that valid assignments can be restriced to those attaining peak inventory level at time $k$ without changing the optimal solution of the RSP. Lemma \ref{lem:inventory level} establishes the relation between the inventory levels $V_\ell(\x)$ for $\ell=1,...,k$ and the total amount ordered at time $j$, $Q_j(\x)$ for $j=1,...,k$.
Let $z(\x)$ be the following function of a valid assignment $\x$:
\[z(\x)=\sum _{j=1}^k (k-j+1) Q_j(\x)= \sum _{j=1}^k (k-j+1)\sum_{i=1}^n s_ix_{ij}.\]
Lemma \ref{lem:objectivez} shows that minimizing the inventory level of time $k$, $V_k(\x)$, is equivalent to maximizing $z(\x)$.

\begin{lemma}[\cite{HR19}]\label{lem:permutation}
	For any valid assignment $\x$ there is a shift-permutation of $1,\ldots,k$, denoted by $\pi(1),\ldots,\pi(k)$, such that the valid assignment $\x'$ with $x'_{ij}=x_{i\pi(j)}$, attains peak inventory level at time $k$, and this new peak inventory level equals the peak inventory level of assignment $\x$. That is, $V_k(\x')=V(\x')=V(\x)$.
\end{lemma}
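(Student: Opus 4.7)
The plan is to exhibit $\pi$ as an explicit cyclic shift of the time axis that moves a peak-inventory time to position $k$, and then check that (a) the resulting assignment is still valid and (b) the whole inventory profile is merely relabeled, so its maximum is preserved and now occurs at time $k$.

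First I would pick $t^{*}\in\arg\max_{\ell\in\{1,\ldots,k\}}V_{\ell}(\x)$, so $V_{t^{*}}(\x)=V(\x)$. Define the shift-permutation $\pi$ by
\[
\pi(j)\;=\;\bigl((j-1+t^{*})\bmod k\bigr)+1,\qquad j=1,\ldots,k,
\]
and let $x'_{ij}=x_{i,\pi(j)}$. By construction $\pi(k)=t^{*}$, so the hope is $V_{k}(\x')=V_{t^{*}}(\x)=V(\x)$.

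Next I would verify that $\x'$ is a valid assignment in the sense of Definition \ref{defvad}. Because $k_{i}\mid k$ for every $i$, subtracting $k_{i}$ from $j$ and then applying $\pi$ amounts (modulo $k$) to first applying $\pi$ and then subtracting $k_{i}$. The original periodicity $x_{i,j}=x_{i,j-k_{i}}$ on the full cycle therefore transports to $x'_{i,j}=x'_{i,j-k_{i}}$, and summing over the first $k_{i}$ positions of $\x'$ reproduces $\sum_{j=1}^{k_{i}}x_{ij}=1$. So both conditions in Definition \ref{defvad} carry over.

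Finally, I would argue that cycling the reorder pattern cycles the inventory profile by the same amount, i.e.\ $V_{j}(\x')=V_{\pi(j)}(\x)$ for all $j$. This is where the periodicity of the problem is essential: the inventory of item $i$ at any time depends only on the time since its last reorder modulo $k_{i}$, and since $k_{i}\mid k$ this "time since last reorder" is preserved under a cyclic shift of the full $k$-window. Granting this, $V_{k}(\x')=V_{\pi(k)}(\x)=V_{t^{*}}(\x)=V(\x)$, while $V(\x')=\max_{j}V_{j}(\x')=\max_{j}V_{\pi(j)}(\x)=V(\x)$ because $\pi$ is a bijection of $\{1,\ldots,k\}$. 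Combining these two equalities yields $V_{k}(\x')=V(\x')=V(\x)$, as required.

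The only genuinely subtle step is the last one: formally translating the cyclic relabeling of reorder times into a cyclic relabeling of inventory levels. I expect this to be routine once one writes the inventory at time $\ell$ as a sum over items of $s_{i}-\bigl((\ell-t_{i}(\ell))\bmod k_{i}\bigr)d_{i}$, where $t_{i}(\ell)$ is item $i$'s most recent reorder on or before time $\ell$, and observes that shifting all $t_{i}(\ell)$ by the same amount (mod $k$) leaves each $(\ell-t_{i}(\ell))\bmod k_{i}$ unchanged when $\ell$ is shifted correspondingly. The divisibility $k_{i}\mid k$ is used in exactly this place.
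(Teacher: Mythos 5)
Your proof is correct and is essentially the intended argument: the paper states this lemma without proof (citing \cite{HR19}), but the term ``shift-permutation'' in the statement already signals exactly your construction, namely cyclically shifting the time axis so that a peak time $t^*$ is relabeled as time $k$. Your verification that validity and the inventory profile are preserved under the shift correctly isolates the one place where $k_i \mid k$ is needed, so nothing is missing.
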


\begin{lemma}[\cite{HR19}]\label{lem:inventory level}
	For any valid assignment $\x$,	
	\begin{equation}\label{Vell}
	V_\ell(\x)= V_k(\x)-\ell D+\sum _{j=1}^{\ell} Q_j(\x), \quad  \ell =1,..,k
	\end{equation}
\end{lemma}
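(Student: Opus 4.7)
The plan is to use the fact that, at integer times, the total inventory evolves via two simple mechanisms: continuous depletion at aggregate rate $D$, and discrete jumps of size $Q_j(\x)$ at each reorder time $j$. I would first clarify the convention that $V_j(\x)$ denotes the inventory level at time $j$ immediately after the reorders scheduled at time $j$ have been received. Under this convention, between any two consecutive integer times $j-1$ and $j$, the aggregate stock is depleted by exactly $D$ (since each item $i$ contributes demand $d_i = s_i/k_i$ per unit of time and the sum of these rates is $D$), and then instantaneously increased by the total reorder size $Q_j(\x)$ at time $j$. In equation form, this gives the one-step recursion
\[ V_j(\x) - V_{j-1}(\x) = Q_j(\x) - D, \qquad j = 1, \ldots, k. \]

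The second ingredient is periodicity. Since $\x$ is a valid assignment, each item $i$ is replenished $k/k_i$ times within the $k$-unit interval, so the total amount reordered over a full cycle equals
\[ \sum_{j=1}^k Q_j(\x) = \sum_{i=1}^n \frac{k}{k_i}\, s_i = k D, \]
which exactly balances the total depletion $kD$. Thus the inventory is periodic with period $k$, i.e. $V_0(\x) = V_k(\x)$.

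With these two facts in hand, I would telescope the one-step recursion from $j=1$ to $j=\ell$ to obtain
\[ V_\ell(\x) - V_0(\x) = \sum_{j=1}^\ell Q_j(\x) - \ell D, \]
and then substitute $V_0(\x) = V_k(\x)$ to recover the claimed identity
\[ V_\ell(\x) = V_k(\x) - \ell D + \sum_{j=1}^\ell Q_j(\x). \]

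There is no real obstacle here; the only subtlety is making the timing convention precise so that reorders at time $j$ are counted on the correct side of the balance equation. Once that convention is fixed, the proof is pure bookkeeping, combining the linear depletion, the instantaneous replenishment jumps, and the cycle-periodicity that follows from the definition of a valid assignment.
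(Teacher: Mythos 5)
Your proof is correct. Note that the paper states this lemma without proof, citing \cite{HR19}; your argument --- the one-step balance $V_j(\x)-V_{j-1}(\x)=Q_j(\x)-D$ under the ``inventory measured just after reorders arrive'' convention, telescoped and combined with the periodicity $V_0(\x)=V_k(\x)$ (which the paper itself notes, and which your computation $\sum_{j=1}^k Q_j(\x)=kD$ confirms) --- is exactly the intended bookkeeping derivation, so there is nothing to flag.
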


\begin{lemma}[\cite{HR19}]\label{lem:objectivez}
	For any valid assignment $\x$,	
	$kV_k(\x)+z(\x)=C$ where $C$ is a constant defined as $\sum_{i=1}^n \half (1+\frac{1}{k_i})ks_i+\frac{(1+k)k}{2}D$.
\end{lemma}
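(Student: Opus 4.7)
The plan is to sum the identity of Lemma \ref{lem:inventory level} across all $\ell\in\{1,\ldots,k\}$ and to exploit the fact that the resulting left-hand side $\sum_\ell V_\ell(\x)$ is actually independent of the particular valid assignment $\x$.

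First I would use Lemma \ref{lem:inventory level} to write
\[
\sum_{\ell=1}^k V_\ell(\x)=kV_k(\x)-D\sum_{\ell=1}^k \ell+\sum_{\ell=1}^k\sum_{j=1}^\ell Q_j(\x).
\]
Swapping the order of summation in the last term, each $Q_j(\x)$ is counted for $\ell=j,j+1,\ldots,k$, i.e.\ exactly $(k-j+1)$ times, so that double sum collapses to $\sum_{j=1}^k(k-j+1)Q_j(\x)=z(\x)$. Combined with $\sum_{\ell=1}^k \ell=k(k+1)/2$, this rearranges to
\[
kV_k(\x)+z(\x)=\sum_{\ell=1}^k V_\ell(\x)+\tfrac{k(k+1)}{2}D.
\]

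The remaining step is to evaluate $\sum_{\ell=1}^k V_\ell(\x)$ and show it equals $\sum_{i=1}^n \tfrac{1}{2}(1+\tfrac{1}{k_i})ks_i$. My key observation is that, because of the validity conditions in Definition \ref{defvad}, each item $i$ cycles through the same deterministic sawtooth $\bigl(s_i,\tfrac{k_i-1}{k_i}s_i,\ldots,\tfrac{1}{k_i}s_i\bigr)$ within every $k_i$-block, regardless of when within its cycle the first reorder is placed; shifting the phase merely permutes the time indices at which each inventory level occurs. Summing this sawtooth over one $k_i$-block gives $\tfrac{s_i}{k_i}\cdot\tfrac{k_i(k_i+1)}{2}=\tfrac{s_i(k_i+1)}{2}$, and item $i$ passes through $k/k_i$ such blocks in the joint cycle, so its total contribution to $\sum_{\ell=1}^k V_\ell(\x)$ is $\tfrac{k}{k_i}\cdot\tfrac{s_i(k_i+1)}{2}=\tfrac{1}{2}(1+\tfrac{1}{k_i})ks_i$. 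Summing over $i$ and substituting into the preceding display recovers exactly the constant $C$ defined in the statement.

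I do not expect a serious obstacle: the only point that requires a moment's care is that $\sum_\ell V_\ell(\x)$ is indeed assignment-invariant, which is what allows the quantity $kV_k(\x)+z(\x)$ to equal a constant despite $V_k(\x)$ itself depending on $\x$.
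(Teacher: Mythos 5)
Your proof is correct. Note that the paper you were given does not actually prove this lemma---it is imported verbatim from \cite{HR19} ``for the sake of completion''---so there is no in-paper argument to compare against; judged on its own merits, your derivation is sound and self-contained. Summing the identity of Lemma \ref{lem:inventory level} over $\ell=1,\ldots,k$ and swapping the order of summation correctly turns $\sum_{\ell=1}^k\sum_{j=1}^{\ell}Q_j(\x)$ into $\sum_{j=1}^k(k-j+1)Q_j(\x)=z(\x)$, and your key observation---that $\sum_{\ell=1}^k V_\ell(\x)$ is assignment-invariant because each item's inventory trajectory is periodic with period $k_i\mid k$, so a phase shift only permutes the multiset of values $\{\tfrac{m}{k_i}s_i: m=1,\ldots,k_i\}$ taken $k/k_i$ times each---is exactly the right reason the left-hand side can be a constant even though $V_k(\x)$ and $z(\x)$ each depend on $\x$. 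The arithmetic checks out: each item contributes $\tfrac{k}{k_i}\cdot\tfrac{(k_i+1)s_i}{2}=\tfrac{1}{2}(1+\tfrac{1}{k_i})ks_i$, which together with the $\tfrac{k(k+1)}{2}D$ term reproduces the constant $C$ as defined. The one thing worth making explicit if you wrote this up formally is the (implicit) decomposition $V_\ell(\x)=\sum_{i=1}^n V^{(i)}_\ell(\x)$ of the total inventory into per-item inventories, which justifies treating each item's sawtooth separately.
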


Restricting valid assignments to those attaining peak inventory level at time $k$ does not change the optimal solution of the RSP. So the RSP can be formulated as minimizing the inventory level at time $k$ such that the schedule is a valid assignment that attains peak inventory level at time $k$, which can be written as $V_\ell(\x) \leq V_k(\x)$ for $\ell =1,...,k$. These inequalities, according to Lemma \ref{lem:inventory level}, are equivalent to,
\begin{equation}\label{cascading}
\sum _{j=1}^{\ell} Q_j(\x)\leq \ell D \mbox{ for } \ell =1,..,k
\end{equation}
This set of inequalities (\ref{cascading}) are referred to as the {\em cascading constraints}. These constraints enforce the peak storage to occur at time $k$.
From Lemma \ref{lem:objectivez}, we know that minimizing the inventory level of time $k$, $V_k(\x)$, is equivalent to maximizing $z(\x)$ as the sum of $kV_k(\x)$ and $z(\x)$ is a constant $C$ defined earlier.
Therefore, the below integer programming formulation (RSP) derived by \cite{HR19} solves the RSP.  For presentation simplicity we use $Q_j(\x)=\sum_{i=1}^n s_i x_{ij}$:

\[
\hspace{.4in}\begin{array}{ll}
\mbox{(RSP)~~~~}  \max \ &
z(\x)=\sum _{j=1}^k (k-j+1)Q_j(\x)\\
\mbox{subject to }\ & \sum _{j=1}^{\ell} Q_j(\x)\leq \ell D
\quad \ell =1,..,k \\
& \sum_{j=1}^{k_i} x_{ij} =1\ \
\quad   i =1,\ldots ,n \\
& x_{ij}=x_{i,(j-k_i)} \quad i=1,\ldots,n,\ \ j=k_i+1,\ldots,k \\
& x_{ij} {\mbox { binary for }} i=1,...,n,\ \ j=1,..,k_i.
\end{array}
\]

\subsection{The dynamic programming algorithm for the RSP }\label{sec:DP}

We present here the dynamic programming algorithm of Hochbaum and Rao \cite{HR19}, which is associated with the IP formulation (RSP).
For $h$ an integer such that $0\leq h\leq n$,  let $\x^h$ denote the assignment of reorders for the first $h$ items.  Let the function $ f_h(q_1,q_2 ,...,q_k)$ be the maximum of $z(\x^h)$ with the cumulative reorder sizes at time $\ell$ being restricted to less than or equal to $q_\ell$ for $\ell=1,...,k$. Here, $(q_1,\ldots ,q_k)$ is an integer array with $q_\ell \in [0, \ell D]$.  Formally,

\[
\hspace{.4in}\begin{array}{lll}
f_h(q_1,q_2 ,...,q_k)= &\max \ &
\sum _{j=1}^k (k-j+1)Q_j(\x^h)\\
&\mbox{subject to }\ & \sum _{j=1}^{\ell} Q_j(\x^h) \leq q_\ell
\quad \ell =1,..,k \\
&& \sum_{j=1}^{k_i} x_{ij} =1\ \
\quad   i =1,\ldots ,h \\
&& x_{ij}=x_{i(j-k_i)} \quad i=1,\ldots,h,\ \ j=k_i+1,\ldots,k \\
&& x_{ij} {\mbox { binary for }} i=1,...,h,\ \ j=1,..,k_i,
\end{array}
\]
where $Q_j(\x^h)=\sum_{i=1}^h s_i x_{ij}$.
The function $f_h(q_1,q_2 ,...,q_k)$ is set to $0\infty$ if the above integer programming problem is infeasible. The optimal solution being sought is $ f_n(D,2D,...,kD)$.

The values of the function $ f_h(q_1,q_2 ,...,q_k)$ are evaluated for every  $0\leq h\leq n$ and any integer array $(q_1,\ldots ,q_k)$, where  $q_j \in [0, jD]$, with a dynamic programming recursion. The boundary conditions are $f_0 (q_1,q_2,\ldots,q_k)=0$ for any $(q_1,q_2,\ldots,q_k)$.
The recursive derivation of $f_h(q_1,q_2,\ldots,q_k)$ from $f_{h-1}(\cdot)$ requires to determine the timing to replenish item $h$ within the first $k_h$ time units so as to maximize the objective $\sum _{j=1}^k (k-j+1)Q_j(\x^h) $.
The recursive equation, using the notation $q'_\ell(\tau)=q_\ell-\lfloor{\frac{\ell-\tau+k_h}{k_h}}\rfloor s_h$, is:
\[
\small
f_h(q_1,q_2 ,...,q_k)= \left\{ \begin{array}{ll}
\max_{\tau=1,...,k_h} \{\left(\frac{k+k_h}{2}+1-\tau \right)\frac{k}{k_h}s_h+f_{h-1}(q'_1(\tau),...,q'_k(\tau))\} , & \mbox{ if $q'_\ell(\tau)\geq 0$ for all $\ell$ } \\
-\infty & \mbox{ otherwise.}
\end{array}
\right.
\]

All function values are evaluated recursively for $h=1,...,n$ and for all integer values of $(q_1,\ldots ,q_k)$, where each $q_j \in [0, jD]$ and $q_j$ integer.  Each function evaluation is associated with a choice of $\tau(h)$, which is the timing of the replenishment of item $h$ within the $k_h$ cycle.  The optimal objective value is then $ f_n(D,2D,...,kD)$.
To recover the optimal valid assignment we record the choices of the replenishment timings within the $k$ cycle, for each function value evaluation.

The running time of this algorithm is $O(nD^k)$ for constant $k$ \cite{HR19}, which is pseudo-polynomial as it depends on the value $D$.

\section{A fully polynomial-time approximation scheme for the RSP with constant joint cycle length}\label{sec:FPTAS-multi}

As the RSP is strongly NP-hard when the joint cycle length is not a constant, there is no fully polynomial-time approximation scheme assuming that $P\neq NP$. However, when the joint cycle length $k$ is constant, it is possible to obtain a fully polynomial-time approximation scheme for this problem. Hochbaum and Rao \cite{HR19} showed an FPTAS for the single-cycle RSP but no FPTAS has been known for the multi-cycle case when $k$ is constant.
In this section, we establish the first known FPTAS for the multi-cycle RSP for constant joint cycle length.

Here we derive a family of $(1+\eps')$-approximation algorithms for the multi-cycle RSP for every $\eps' >0$.
The $(1+\eps')$-approximation algorithm works by applying the dynamic programming algorithm in Section \ref{sec:DP} with scaled reorder sizes with some scaling factor $\scl$. We show in this section that the output of the dynamic programming algorithm using the scaled sizes is within a factor of $1+\eps'$ of the optimal solution.  The run time of this approximation algorithm is polynomial in $n$ and $\frac{1}{\eps'}$, and hence this family of algorithms is a fully polynomial approximation scheme.

%

The approximation algorithm solves a modified RSP, (scaled-RSP), in which the order sizes are scaled by a factor $\scl$.
The scaled problem is solvable using the dynamic programming procedure of Section \ref{sec:DP} and the solution of it is a valid assignment that has objective function value close to the optimal value of (RSP).

\subsection{The scaling of (RSP), (scaled-RSP)}

For any $\eps'>0$, we let $\eps= \eps'/2$ and we scale the reorder sizes by the factor $\scl =  \frac{\eps D}{kn}$ as follows.
Let $s'_i= \lfloor{\frac{s_i}{\scl}}\rfloor$ be the scaled sizes of items $i=1,...,n$ and $D'=\frac{D}{\scl}$ be the scaled demand.
Let $Q'_j(\x)$ and $z'(\x)$ denote the ``scaled" replenishment sizes at time $j$ and the objective function for the scaled sizes $s'_i$:
$Q'_j(\x)=\sum_{i=1}^n s'_i x_{ij}, \ \ j =1,..,k$;
$z'(\x)=\sum _{j=1}^k (k-j+1) Q'_j(\x)$.
\vspace{1em}

The scaled problem (scaled-RSP) is formulated as follows:

\[
\hspace{.4in}\begin{array}{ll}
\mbox{(scaled-RSP)~~~~}  \max \ &
z'(\x)=\sum _{j=1}^k (k-j+1) Q'_j(\x)\\
\mbox{subject to }\ &   \sum _{j=1}^{\ell}Q'_j(\x)\leq \ell D'
\quad \ell =1,..,k \\
& \sum_{j=1}^{k_i} x_{ij} =1\ \
\quad   i =1,\ldots ,n \\
& x_{ij}=x_{i,(j-k_i)} \quad i=1,\ldots,n,\ \ j=k_i+1,\ldots,k \\
& x_{ij} {\mbox { binary for }} i=1,...,n,\ \ j=1,..,k_i.
\end{array}
\]

The optimal solution for (scaled-RSP) is found by applying the dynamic programming procedure in Section \ref{sec:DP} with scaled sizes $D'$ and $s'_1,\ldots,s'_{n}$.

The running time of finding the optimal solution for  (scaled-RSP) with the dynamic programming procedure, is $O(n D'^k )=O(\frac{n^{k+1}}{\eps^{k}})$.

Next we define the ($\eps$-relaxed RSP) and then prove that any feasible solution for (scaled-RSP), including $\hat{\x}$, is feasible for ($\eps$-relaxed RSP).

\subsection{The $\eps$-relaxed RSP}\label{sec:relaxed}

The ($\eps$-relaxed RSP) formulation allows the cascading constraints to be violated by up to $ \eps D$ as follows:

\[
\hspace{.4in}\begin{array}{ll}
(\eps\mbox{-relaxed RSP)~~~~}  \max \ &
z(\x)=\sum _{j=1}^k (k-j+1)Q_j(\x)\\
\mbox{subject to }\ & \sum _{j=1}^{\ell} Q_j(\x) \leq \ell D+ \eps D
\quad \ell =1,..,k \\
& \sum_{j=1}^{k_i} x_{ij} =1\ \
\quad   i =1,\ldots ,n \\
& x_{ij}=x_{i,(j-k_i)} \quad i=1,\ldots,n,\ \ j=k_i+1,\ldots,k \\
& x_{ij} {\mbox { binary for }} i=1,...,n,\ \ j=1,..,k_i.
\end{array}
\]
We refer to the constraints $\sum _{j=1}^{\ell} Q_j(\x) \leq \ell D+ \eps D$ as the {\em $\eps$-relaxed cascading constraints}. We next show that the effect of the  $\eps$-relaxed cascading constraints on the optimal solution is at most $ \eps D$.

\begin{lemma}\label{lem:relaxedV}
 The peak inventory level of any feasible solution $\x$ to ($\eps$-relaxed RSP) is at most $V_k(\x)+ \eps D$.
\end{lemma}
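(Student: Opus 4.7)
The plan is to apply Lemma \ref{lem:inventory level} directly and then substitute the $\eps$-relaxed cascading constraint into the resulting expression for $V_\ell(\x)$. This is essentially a one-step calculation, so the proof will be short.

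First I would recall that for any valid assignment $\x$, Lemma \ref{lem:inventory level} gives
\[
V_\ell(\x) = V_k(\x) - \ell D + \sum_{j=1}^{\ell} Q_j(\x), \quad \ell = 1,\ldots,k.
\]
Note that this identity was derived purely from the structure of inventory depletion between replenishment instants and the definition of $Q_j(\x)$; no use was made of the original (un-relaxed) cascading constraints. Hence it remains valid for any $\x$ that is a valid assignment (in the sense of Definition \ref{defvad}), including every feasible solution to ($\eps$-relaxed RSP).

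Next I would plug in the $\eps$-relaxed cascading constraint $\sum_{j=1}^{\ell} Q_j(\x) \leq \ell D + \eps D$ to obtain
\[
V_\ell(\x) \leq V_k(\x) - \ell D + \ell D + \eps D = V_k(\x) + \eps D
\]
for every $\ell = 1,\ldots,k$. Since $V_k(\x) \leq V_k(\x) + \eps D$ trivially, taking the maximum over $\ell \in \{1,\ldots,k\}$ yields $V(\x) = \max_\ell V_\ell(\x) \leq V_k(\x) + \eps D$, which is the claim.

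There is no real obstacle here; the only point that warrants mention is that Lemma \ref{lem:inventory level} applies to feasible solutions of ($\eps$-relaxed RSP) because the last three constraint blocks of ($\eps$-relaxed RSP) coincide with the definition of a valid assignment, so the identity \eqref{Vell} still holds and only the bound on the cumulative reorder sizes has changed from $\ell D$ to $\ell D + \eps D$.
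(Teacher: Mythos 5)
Your argument is correct and is essentially identical to the paper's proof: both invoke Lemma \ref{lem:inventory level} (which applies because any feasible solution of ($\eps$-relaxed RSP) is a valid assignment), substitute the $\eps$-relaxed cascading constraint, and take the maximum over $\ell$. No issues.
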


\begin{proof}
Any feasible solution $\x$ for ($\eps$-relaxed RSP) is a valid assignment, so Lemma \ref{lem:inventory level} applies. That is, $V_\ell({\x})= V_k({\x})+\left(\sum _{j=1}^{\ell} Q_j({\x})-\ell D\right)$ for $\ell=1,...,k$. The $\eps$-relaxed cascading constraints state that $\sum _{j=1}^{\ell} Q_j({\x}) -\ell D \leq \eps D$	for all $\ell$. So when $\x$ is a feasible solution of ($\eps$-relaxed RSP), $V_\ell({\x})\leq V_k({\x})+ \eps D$ for all $\ell$,
and hence, $V({\x})=\max_{\ell} V_{\ell}({\x})\leq V_k({\x})+ \eps D$.
	
\end{proof}

The next lemma proves that any feasible solution for (scaled-RSP), including $\hat{\x}$, is feasible for ($\eps$-relaxed RSP).

\begin{lemma} \label{lem:fea}
	Any assignment $\x$ that is feasible for (scaled-RSP) is feasible for ($\eps$-relaxed RSP).
\end{lemma}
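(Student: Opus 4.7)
The plan is to verify the lemma by inspecting the two formulations side by side. The validity constraints (each item ordered exactly once per $k_i$ and the periodicity) are identical in (scaled-RSP) and in ($\eps$-relaxed RSP), so the only thing to check is that the cascading constraints of (scaled-RSP) imply the $\eps$-relaxed cascading constraints. The approach is a one-shot rounding bound: the reorder sizes were divided by $\scl$ and floored, so each item contributes a per-slot error of less than $\scl$, and the cumulative error across all items and all $k$ time slots is controlled precisely by the choice of $\scl$.

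First I would start from the definition $s'_i=\lfloor s_i/\scl\rfloor$, which gives the elementary bound $s_i < \scl(s'_i+1)$, hence $s_i \leq \scl s'_i + \scl$. Multiplying by $x_{ij}\in\{0,1\}$ and summing over $i$ yields
\[ Q_j(\x) \;\leq\; \scl\, Q'_j(\x) + \scl \sum_{i=1}^n x_{ij}, \qquad j=1,\ldots,k. \]
Summing this over $j=1,\ldots,\ell$ gives
\[ \sum_{j=1}^\ell Q_j(\x) \;\leq\; \scl \sum_{j=1}^\ell Q'_j(\x) \;+\; \scl \sum_{j=1}^\ell \sum_{i=1}^n x_{ij}. \]

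Next I would bound the two right-hand terms separately. By the feasibility of $\x$ for (scaled-RSP) and the identity $\scl D' = D$, the first term satisfies $\scl \sum_{j=1}^\ell Q'_j(\x) \leq \scl \cdot \ell D' = \ell D$. For the second term, at each time slot at most $n$ items can be reordered, so $\sum_{i=1}^n x_{ij}\leq n$ and thus $\sum_{j=1}^\ell \sum_{i=1}^n x_{ij} \leq \ell n \leq kn$. Plugging in $\scl = \eps D/(kn)$ bounds this contribution by $\scl \cdot kn = \eps D$. Combining the two gives $\sum_{j=1}^\ell Q_j(\x) \leq \ell D + \eps D$, which is the $\eps$-relaxed cascading constraint.

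There is no real obstacle to this argument; it is essentially bookkeeping. The only point that requires care is that the scaling factor $\scl = \eps D/(kn)$ was calibrated so that the worst-case cumulative rounding error (at most $\scl$ per $(i,j)$ pair, across $n$ items and $k$ slots) equals exactly $\eps D$, which is the slack budget of the $\eps$-relaxed cascading constraints.
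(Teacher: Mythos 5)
Your proof is correct and follows essentially the same route as the paper's: both use $s_i<\scl(s'_i+1)$ to split the cumulative order sizes into a scaled part bounded by $\ell D'\scl=\ell D$ via feasibility for (scaled-RSP) and a rounding part bounded by $\scl\cdot kn=\eps D$. The only cosmetic difference is that you bound $\sum_{j=1}^{\ell}\sum_{i=1}^{n}x_{ij}$ by $\ell n\leq kn$ slot-by-slot while the paper bounds it by extending the sum to all $k$ slots; both yield the same $nk$ bound.
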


\begin{proof}
In both problems $\x$ is required to be a valid assignment.
It remains to show that $\x$ satisfies the $\eps$-relaxed cascading constraints, that is,
$\sum _{j=1}^{\ell} Q_j(\x) \leq \ell D+\eps D$  for $\ell =1,..,k$.

	By definition, $s'_i= \lfloor{\frac{s_i}{\scl}}\rfloor$. So $s_i<\scl (s'_i +1)$ and thus,
	\begin{equation}\label{l91}
	\sum _{j=1}^{\ell} Q_j(\x)=\sum_{j=1}^{\ell} \sum_{i=1}^{n} s_i \x_{ij} \leq  \sum_{j=1}^{\ell} \sum_{i=1}^{n} \scl(s'_i+1) \x_{ij}=\scl\left( \sum_{j=1}^{\ell} \sum_{i=1}^{n} s'_i \x_{ij}+  \sum_{j=1}^{\ell} \sum_{i=1}^{n}\x_{ij} \right).
	\end{equation}

	Since $\x$ is feasible for (scaled-RSP) and $D'=\frac{D}{\scl}$,
	\begin{equation}\label{l92}
	\sum_{j=1}^{\ell} \sum_{i=1}^{n} s'_i \x_{ij} = 	\sum _{j=1}^{\ell} Q'_j(\x) \leq \ell D'=\frac{\ell D}{\scl}.
	\end{equation}	
	
	For $\ell=1, ..., k$,
		\begin{equation}\label{l93}
		\sum_{j=1}^{\ell} \sum_{i=1}^{n}\x_{ij}\leq 	\sum_{j=1}^{k} \sum_{i=1}^{n}\x_{ij} \leq nk
		\end{equation}

	Hence from inequalities (\ref{l91}), (\ref{l92}) and (\ref{l93}),
	\[	\sum _{j=1}^{\ell} Q_j(\x)\leq \scl  \left(\frac{\ell D}{\scl}+ nk \right) =\ell D+\eps D.\]
	
\end{proof}

Using the relationship between reorder sizes $s_i$ and the scaled sizes $s'_i$, we show that for any feasible solution of (scaled-RSP),  $\x$, the objective with original sizes $z(\x)= \sum _{j=1}^k (k-j+1) Q_j(\x)$ is closely approximated by the objective with scaled sizes $z'(\x)= \sum _{j=1}^k (k-j+1) Q'_j(\x)$, corrected for the scaling factor $\scl$:

\subsection{The approximation property of the solution to (scaled-RSP)}

\begin{lemma}\label{lem:zz'}
	For any assignment of items $\x$ feasible for (scaled-RSP), the values of the objective function with original and scaled sizes, $z(\x)$ and $z'(\x)$ respectively, satisfy,
	
	\[ \scl z'(\x)\ \leq\  z(\x) \ \leq\   \scl z'(\x)+ \eps kD. \]
\end{lemma}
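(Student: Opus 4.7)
The plan is to exploit the definition $s'_i = \lfloor s_i/\scl \rfloor$, which gives the sandwich $\scl s'_i \leq s_i < \scl(s'_i+1)$, and push these inequalities through the linear expression defining $z(\x)$. Since $z(\x)$ and $z'(\x)$ differ only in whether the coefficients $s_i$ or $s'_i$ sit inside the same nonnegative sum $\sum_{j=1}^{k}(k-j+1)\sum_{i=1}^{n}(\cdot)\,x_{ij}$, both inequalities will fall out by plugging in the appropriate side of the sandwich.

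For the lower bound, I would simply note that $s_i \geq \scl s'_i$ for every $i$, so
\[
z(\x) \;=\; \sum_{j=1}^{k}(k-j+1)\sum_{i=1}^{n} s_i x_{ij}
\;\geq\; \scl\sum_{j=1}^{k}(k-j+1)\sum_{i=1}^{n} s'_i x_{ij}
\;=\; \scl z'(\x).
\]
This is the clean half of the lemma and uses no property of $\x$ beyond nonnegativity of the $x_{ij}$.

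For the upper bound, I would start from $s_i \leq \scl(s'_i+1)$ to obtain
\[
z(\x) \;\leq\; \scl z'(\x) \;+\; \scl \sum_{j=1}^{k}(k-j+1)\sum_{i=1}^{n} x_{ij},
\]
so the task reduces to showing that the residual sum is at most $n k^2$. Here is where the fact that $\x$ is a valid assignment enters: for every item $i$ the orders are spaced $k_i$ units apart in a cycle of length $k$, hence $\sum_{j=1}^{k} x_{ij} = k/k_i \leq k$. Combined with the crude bound $(k-j+1)\leq k$, I get $\sum_{j}(k-j+1)\sum_{i}x_{ij}\leq k\sum_{i}(k/k_i)\leq nk^2$. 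Substituting $\scl = \eps D/(kn)$ turns the residual into exactly $\eps k D$, completing the proof.

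The only place that requires any thought is the residual sum in the upper bound, and even there the argument is a one-line application of the validity constraint together with a trivial coefficient bound $(k-j+1)\leq k$. There is no real obstacle; the calculation is engineered so that the extra factor $n k^2$ from the integrality gap cancels against the choice $\scl = \eps D/(kn)$ to yield the target additive error $\eps k D$.
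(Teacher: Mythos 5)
Your proposal is correct and follows essentially the same route as the paper: both halves come from the sandwich $\scl s'_i \leq s_i < \scl(s'_i+1)$, and the residual term in the upper bound is bounded by $nk^2$ (the paper simply uses that $\x$ is an $n\times k$ binary matrix and $(k-j+1)\leq k$, whereas you invoke the validity constraint, but the bound and the cancellation with $\scl = \eps D/(kn)$ are identical).
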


\begin{proof}
	
	Recall that $s'_i= \lfloor{\frac{s_i}{\scl}}\rfloor$, so $\scl s'_i\leq s_i<\scl  (s'_i +1)$. We derive the lower bound on $z(\x)$ as follows:
	
	\begin{align*}
	z({\x}) &= \sum _{j=1}^k (k-j+1) Q_j({\x})\\
	&= \sum _{j=1}^k (k-j+1) \sum_{i=1}^{n} s_i{x}_{ij}\\
	& \geq \scl \cdot  \sum _{j=1}^k (k-j+1)\sum_{i=1}^{n}  s'_i {x}_{ij}\\
	& = \scl \cdot \sum _{j=1}^k (k-j+1)) Q'_j({\x})\\
	&= \scl  z'({\x}) . \numbereq \label{l1}
	\end{align*}
	
	The upper bound on $z(\x)$ can be derived as follows:
	
	\begin{align*}
	z({\x}) &= \sum _{j=1}^k (k-j+1) Q_j({\x})\\
	&= \sum _{j=1}^k (k-j+1) \sum_{i=1}^{n} s_i{x}_{ij}\\
	& \leq \scl \cdot  \sum _{j=1}^k (k-j+1)\sum_{i=1}^{n} ( s'_i+1) {x}_{ij}\\
	& = \scl \cdot \left[ \sum _{j=1}^k (k-j+1) Q'_j({\x}) +\sum _{j=1}^k (k-j+1) \sum_{i=1}^{n}{x}_{ij} \right] \\
	&\leq \scl  z'({\x}) + \scl k^2n \\
	&= \scl z'({\x}) + \eps kD . \numbereq \label{u1}
\end{align*}

\end{proof}

Lemma \ref{lem:zz'} leads to the following lower bound on $z(\hat{\x})$ for $\hat{\x}$ being an optimal solution of (scaled-RSP):

\begin{theorem}\label{lem:superopt20}
	For any feasible solution $\x$ of (RSP), $z(\hat{\x})\geq z(\x)-\eps kD$.
\end{theorem}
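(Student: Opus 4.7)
The plan is to relate the original and scaled objectives through $\hat{\x}$'s optimality for (scaled-RSP), exploiting the sandwich inequality $\scl z'(\x) \leq z(\x) \leq \scl z'(\x) + \eps k D$ from Lemma \ref{lem:zz'}. Since both bounds have the common factor $\scl z'(\cdot)$, optimality of $\hat{\x}$ in the scaled world transfers, up to $\eps k D$, to a near-optimality statement in the original world.

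First I would show that every $\x$ feasible for (RSP) is also feasible for (scaled-RSP). Validity of the assignment (the equality and periodicity constraints) is identical in both formulations, so only the cascading constraints need checking. Because $s'_i = \lfloor s_i/\scl \rfloor \leq s_i/\scl$, one has $Q'_j(\x) \leq Q_j(\x)/\scl$ for every $j$, and summing gives
\[
\sum_{j=1}^{\ell} Q'_j(\x) \ \leq\ \frac{1}{\scl}\sum_{j=1}^{\ell} Q_j(\x) \ \leq\ \frac{\ell D}{\scl} \ =\ \ell D',
\]
which is exactly the scaled cascading constraint. So (RSP) feasibility implies (scaled-RSP) feasibility.

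Next, since $\hat{\x}$ is optimal for (scaled-RSP) and $\x$ is feasible for (scaled-RSP), we have $z'(\hat{\x}) \geq z'(\x)$. Applying the lower bound of Lemma \ref{lem:zz'} to $\hat{\x}$ and the upper bound of Lemma \ref{lem:zz'} to $\x$ gives
\[
z(\hat{\x}) \ \geq\ \scl\, z'(\hat{\x}) \ \geq\ \scl\, z'(\x) \ \geq\ z(\x) - \eps k D,
\]
which is exactly the claimed inequality. No step looks like it hides a real obstacle; the main thing to be careful about is simply invoking Lemma \ref{lem:zz'} in the correct direction on each side of the inequality (lower bound on the $\hat{\x}$ side, upper bound on the $\x$ side), since applying it the wrong way would lose the sign.
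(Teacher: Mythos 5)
Your proof is correct and follows essentially the same route as the paper: apply the lower bound of Lemma \ref{lem:zz'} to $\hat{\x}$, the upper bound to $\x$, and link them via optimality of $\hat{\x}$ for (scaled-RSP). You additionally spell out why every (RSP)-feasible $\x$ is (scaled-RSP)-feasible, a step the paper asserts without proof; your one-line verification via $s'_i \leq s_i/\scl$ is a welcome addition.
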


\begin{proof}

	By Lemma \ref{lem:zz'}, we know
	$ z(\hat{\x})\geq \scl z'(\hat{\x})$.
Since any feasible solution of (RSP), $\x$, is also feasible for (scaled-RSP), we use the upper bound of $z(\x)$  from Lemma \ref{lem:zz'} to get:
	\[\scl  z'(\x) \geq  z(\x)-\eps kD.\]
Because $\hat{\x}$ is optimal for (scaled-RSP), it follows that $z'(\hat{\x})\geq z'(\x)$. Combining the three inequalities, we get
\[
z(\hat{\x}) \geq \scl  z'(\hat{\x}) \geq \scl   z'(\x) \geq z(\x)-\eps kD.
\]
\end{proof}

Consequently, the optimal solution $\hat{\x}$ for (scaled-RSP) attains a objective value $z(\hat{\x})$ that is at least as much as the optimal objective of (RSP) minus $\eps kD$.

\subsection{The $(1+\eps')$-approximation bound}

From the discussion above, we know that the optimal solution for (scaled-RSP) $\hat{\x}$ is a valid assignment whose inventory levels at time $k$ approximates that maximum inventory level, and the value $z(\x)$ approximates the optimal objective value of (RSP). We will prove here that $\hat{\x}$ is an $(1+\eps ')$-approximation solution for $\eps' =2\eps$ and any $\eps>0$.

%

\begin{theorem}
	The optimal solution $\hat{\x}$ for (scaled-RSP) is a $(1+\eps')$-approximation solution for the RSP.
\end{theorem}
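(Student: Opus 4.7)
My plan is to chain together the bounds already assembled in the preceding lemmas so that the peak storage $V(\hat{\x})$ is controlled by $V^*+2\eps D$, and then to convert this additive error into the multiplicative factor $(1+\eps')$ via a lower bound $V^*\geq D$. Throughout, I use Lemma \ref{lem:permutation} to assume without loss of generality that an optimal solution $\x^*$ of (RSP) satisfies $V(\x^*)=V_k(\x^*)=V^*$.

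The first step is to translate the objective-value bound of Theorem \ref{lem:superopt20} into an inventory-level bound at time $k$. Applying Lemma \ref{lem:objectivez} to both $\hat{\x}$ and $\x^*$ yields $kV_k(\hat{\x})=C-z(\hat{\x})$ and $kV_k(\x^*)=C-z(\x^*)$; subtracting and using $z(\hat{\x})\geq z(\x^*)-\eps kD$ gives $V_k(\hat{\x})\leq V_k(\x^*)+\eps D=V^*+\eps D$. By Lemma \ref{lem:fea}, $\hat{\x}$ is feasible for ($\eps$-relaxed RSP), so Lemma \ref{lem:relaxedV} yields $V(\hat{\x})\leq V_k(\hat{\x})+\eps D\leq V^*+2\eps D$.

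The remaining ingredient, and the only genuinely new observation needed, is the lower bound $V^*\geq D$ that absorbs the additive error into a multiplicative one. My argument is that at any reorder time $j$, every item $i$ with $x_{ij}=1$ has inventory exactly $s_i$ and all other inventories are nonnegative, so $V_j(\x)\geq Q_j(\x)$ for every valid assignment $\x$; since $\sum_{j=1}^k Q_j(\x)=\sum_{i=1}^n s_i\cdot \frac{k}{k_i}=kD$, averaging yields $V(\x)\geq \max_j Q_j(\x)\geq D$, and in particular $V^*\geq D$. Combining everything, $V(\hat{\x})\leq V^*+2\eps D\leq V^*+2\eps V^*=(1+2\eps)V^*=(1+\eps')V^*$, as required. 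The main obstacle here is identifying this lower bound on $V^*$; the rest of the argument is essentially bookkeeping on the lemmas already proved, together with the observation that $\eps'=2\eps$ was chosen precisely so that the combined error from scaling the sizes and from the $\eps$-relaxation of the cascading constraints each contributes $\eps D$.
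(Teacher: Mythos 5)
Your proof is correct and follows essentially the same route as the paper's: both convert Theorem \ref{lem:superopt20} into $V_k(\hat{\x})\leq V^*+\eps D$ via Lemma \ref{lem:objectivez}, add another $\eps D$ via Lemmas \ref{lem:fea} and \ref{lem:relaxedV}, and finish with $V^*\geq D$. The only (welcome) addition is that you actually justify the bound $V^*\geq D$, which the paper merely asserts as an observation.
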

\begin{proof}
	Assignment $\hat{\x}$ is valid as it is feasible for (scaled-RSP). So we just need to prove the approximation factor for the peak inventory level.

	Let $\x^*$ be an optimal solution of (RSP), and $V^*$ the corresponding peak inventory level.
	
	As stated in Theorem \ref{lem:superopt20}, $z(\hat{\x}) \geq z(\x)-\eps kD $ for any $\x$ that is feasible of (RSP), including $\x^*$.   From Lemma \ref{lem:objectivez}, the inventory levels at time $k$ for $\hat{\x}$ and $\x^*$ are $ V_k(\hat{\x})=\frac{C}{k}-\frac{z(\hat{\x})}{k}$ and $ V_k(\x^*)=\frac{C}{k}-\frac{z(\x^*)}{k}$ respectively. Therefore,	 
	\[ V_k(\hat{\x})=\frac{C}{k}-\frac{z(\hat{\x})}{k} \leq \frac{C}{k}-\frac{z(\x^*)}{k}+\frac{\eps kD}{k} =V_k(\x^*)  +\eps D.\]
	
	From Lemma \ref{lem:relaxedV} it follows that the peak inventory level for $\hat{\x}$ satisfies $ V(\hat{\x})\leq V_k(\hat{\x})+\eps D$.  Since $\x^*$ is a solution of (RSP), the peak inventory level for $\x^*$ is $V^*=V_k(\x^*)$. Hence,
	\[ V(\hat{\x})\leq V_k(\hat{\x})+\eps D \leq V^*  + 2\eps D . \]
	
	That is, for the optimum peak storage of (RSP), $V^*$, and for the optimal solution of (scaled-RSP) $\hat{\x}$, the ratio $V(\hat{\x})/V^*$ is at most $1+ 2\eps D/V^*$.
	Observe that $V^*$ must be at least the per unit time demand $D$, it follows that
	$2\eps D/V^*  \leq  {2\eps}$.

	Therefore, the ratio $V(\hat{\x})/V^*$ is at most $1+2\eps=1+\eps'$. Hence, $\hat{\x}$ is a $(1+\eps')$-approximate solution to the RSP.
\end{proof}

The complexity of this approximation procedure is $O(\frac{n^{(k+1)}}{\eps^{k}})$ for constant $k$. Noted that $\frac{1}{\eps}=O(\frac{1}{\eps'})$. Therefore the complexity of the RSP $(1+\eps')$-approximation algorithm is $O(\frac{n^{(k+1)}}{\eps'^{k}})$, which is polynomial in $n$ and $\frac{1}{\eps'}$ for constant $k$. And a family of $(1+\eps ')$-approximation algorithms with complexity that is polynomial in $n$ and $\frac{1}{\eps'}$ is called a Fully Polynomial Time Approximation Scheme.

\section{Concluding Remarks}\label{sec:conclude}

Both the single-cycle and the multi-cycle RSPs are weakly NP-hard but an FPTAS was known only for the single cycle RSP, in \cite{HR19}.  Here we devise an FPTAS for the {\em multi-cycle} RSP with constant joint cycle length. The running time of our FPTAS here is not fixed-parameter tractable as compared to the running time of the FPTAS for the single-cycle case. We leave the existence of a fixed-parameter tractable FPTAS for the multi-cycle RSP as an open question. The question of whether there exists a PTAS for the multi-cycle RSP when the joint cycle length is not constant remains open as well.

%

\bibliographystyle{siamplain}
\bibliography{mybibfile}

\end{document}